\theoremstyle{definition}
\newtheorem{definition}{Definition}[section]
\theoremstyle{plain}
\newtheorem{theorem}{Theorem}[section]
\newtheorem{proposition}{Proposition}[section]
\theoremstyle{remark}
\begin{document}

\title[Limited Math]{Limited Math: Aligning Mathematical Semantics with Finite Computation}
\author[L. Wen]{Lian Wen}

\address{School of ICT, Griffith University, Australia}
\email{l.wen@griffith.edu.au}

\begin{abstract}
Classical mathematical models used in the semantics of programming languages and
computation rely on idealized abstractions such as infinite-precision real numbers,
unbounded sets, and unrestricted computation. In contrast, concrete computation is
inherently finite, operating under bounded precision, bounded memory, and explicit
resource constraints. This discrepancy complicates semantic reasoning about numerical
behavior, algebraic properties, and termination under finite execution.

This paper introduces Limited Math (LM), a bounded semantic framework that aligns
mathematical reasoning with finite computation. Limited Math makes constraints on
numeric magnitude, numeric precision, and structural complexity explicit and
foundational. A finite numeric domain parameterized by a single bound \(M\) is equipped
with a deterministic value-mapping operator that enforces quantization and explicit
boundary behavior. Functions and operators retain their classical mathematical
interpretation and are mapped into the bounded domain only at a semantic boundary,
separating meaning from bounded evaluation.

Within representable bounds, LM coincides with classical arithmetic; when bounds are
exceeded, deviations are explicit, deterministic, and analyzable. By additionally
bounding set cardinality, LM prevents implicit infinitary behavior from re-entering
through structural constructions. As a consequence, computations realized under LM
induce finite-state semantic models, providing a principled foundation for reasoning
about arithmetic, structure, and execution in finite computational settings.
\end{abstract}

\maketitle

\section{Introduction}

Classical mathematical models form the foundation of programming language semantics
and formal reasoning about computation \cite{pierce2002tapl}. In these models, programs
and expressions are typically interpreted over idealized abstractions such as
infinite-precision real numbers, unbounded sets, and unrestricted computation. While
these abstractions enable elegant theory, they introduce a persistent semantic gap when
confronted with the inherently finite nature of concrete computation, which operates
under bounded precision, bounded memory, and explicit resource constraints.

This gap manifests in well-known and structurally significant ways. Finite-precision
arithmetic exhibits behaviors such as rounding anomalies, equality failures, and
non-associativity, complicating semantic reasoning about numerical expressions
\cite{goldberg1991floating,higham2002accuracy}. More generally, classical semantic
models often assume unbounded computational resources, even though concrete executions
necessarily range over finite representations. Even fundamental questions such as
termination are undecidable in classical models of computation
\cite{turing1936computable,sipser2012theory}, despite the fact that bounded machines
induce finite state spaces.

These discrepancies are not merely artifacts of implementation, but reflect a deeper
mismatch between idealized mathematical semantics and finite computation. Treating
finiteness as an implicit or secondary concern forces semantic reasoning to rely on
external assumptions about precision, magnitude, and structure, obscuring when and why
classical properties hold and when they fail.

In this paper, we introduce \emph{Limited Math (LM)}, a semantic framework designed to
align mathematical reasoning with finite computation. Rather than assuming unbounded
precision and structure, Limited Math makes constraints on numeric magnitude, numeric
precision, and structural complexity explicit and foundational. A single parameter
\(M\) determines the maximum representable magnitude, the minimum positive unit of
precision, and the total number of distinct representable numeric values. Arithmetic is
defined over a finite, uniformly discretized domain via a deterministic value-mapping
operator that enforces quantization and explicit boundary behavior.

Beyond numeric bounds, Limited Math also restricts the size of sets admitted by the
framework. By bounding set cardinality to scale with the representational capacity of
the numeric domain, LM prevents implicit infinitary behavior from re-entering the model
through combinatorial constructions. As a result, all collections are finite,
enumerable, and internally representable, ensuring that quantification and structural
reasoning remain compatible with bounded computation. This perspective is
philosophically aligned with finite model theory, while remaining focused on numerical
semantics rather than logical expressiveness \cite{libkin2004elements}.

A central design principle of Limited Math is the separation between semantic meaning and
bounded evaluation. Functions and operators retain their classical mathematical
interpretation, while a single semantic boundary maps results into the finite LM domain.
This approach avoids premature quantization, limits error accumulation in composed
expressions, and preserves classical algebraic behavior whenever intermediate results
remain within representable bounds. When bounds are exceeded, deviations are explicit,
deterministic, and semantically analyzable.

When Limited Math is realized under bounded memory, computation ranges over a finite
state space. Consequently, every execution either halts or eventually revisits a
previously encountered state and evolves cyclically. This observation does not alter
classical undecidability results, which rely on unbounded computation models, but makes
explicit the finite-state semantics induced by bounded execution.

Limited Math is not proposed as a replacement for classical mathematics or real
analysis. Instead, it provides a foundational semantic framework for reasoning about
finite-precision arithmetic, bounded structures, and finite computation. By making
resource bounds explicit rather than implicit, LM clarifies the semantic consequences
of finiteness and provides a principled basis for reasoning about computation in finite
mathematical settings.

\paragraph{Contributions.}
The main contributions of this work are:
\begin{itemize}[leftmargin=1.5em]
  \item a bounded numeric domain with explicit magnitude and precision constraints;
  \item a deterministic value-mapping operator aligning arithmetic semantics with finite computation;
  \item a principled restriction on set cardinality ensuring enumerability and internal representability;
  \item a semantic operator-mapping framework separating mathematical meaning from bounded evaluation;
  \item an analysis of algebraic properties within and beyond representable bounds;
  \item a finite-state semantic interpretation induced by bounded-memory computation.
\end{itemize}

\section{Design Principles of Limited Math}

Limited Math is guided by a small set of explicit design principles intended to reconcile mathematical rigor with practical computability. Rather than treating finiteness as an implementation artifact, LM incorporates resource bounds directly into its formal structure. This section outlines the core principles underlying the framework and explains their roles in shaping the theory.

\subsection{Bounded Numeric Magnitude}

The first principle of Limited Math is that all numeric values have bounded magnitude. A parameter \(M > 0\) determines the maximum representable absolute value. Any quantity whose real-valued magnitude exceeds this bound is deterministically mapped to the boundary value via a saturation rule.

Bounding magnitude reflects the reality of finite computation, where overflow is unavoidable and must be handled explicitly. In classical mathematics, magnitude is unbounded, allowing values to grow without restriction. In LM, by contrast, magnitude bounds are first-class and visible in the semantics. This makes overflow behavior explicit and analyzable rather than implicit or undefined.

\subsection{Bounded Numeric Precision}

The second principle is bounded precision. Limited Math represents numbers on a uniform discrete grid with minimum positive unit \(1/M\). All values are exact multiples of this unit, ensuring that every representable number has a finite description and can be manipulated without rounding ambiguity.

This choice aligns LM with fixed-point arithmetic rather than floating-point arithmetic. Unlike floating-point systems, LM avoids non-uniform spacing, hidden rounding modes, and exceptional values. Precision is explicit, uniform, and parameterized by \(M\), allowing users to choose an appropriate resolution for a given problem domain.

Bounding precision is not viewed as a limitation but as a deliberate modeling choice. It ensures that arithmetic operations are deterministic and that numerical error is bounded and predictable.

\subsection{Bounded Structural Complexity}

The third principle concerns structural complexity. In addition to bounding individual numeric values, Limited Math restricts the size of sets admitted by the theory. Specifically, all sets are required to have cardinality bounded by a function of \(M\), chosen to scale with the representational capacity of the numeric domain.

The motivation for this restriction is twofold. First, the numeric domain itself contains only finitely many distinct values. Admitting sets whose cardinality exceeds this capacity would introduce objects that cannot be indexed, enumerated, or distinguished internally. Second, many infinitary phenomena in classical mathematics arise not from numbers themselves, but from unrestricted set construction. By bounding set size, LM prevents implicit infinities from re-entering the theory through combinatorial means.

This restriction ensures that all collections are finite, enumerable, and internally representable. As a result, quantification over sets in LM is operationally meaningful and compatible with finite computation.

\subsection{Separation of Semantics and Execution}

A key design principle of Limited Math is the separation between semantic definition and execution. Mathematical functions and operators are interpreted at the level of classical real-valued mathematics, where their meaning is well established. Their results are then mapped into the bounded LM domain using a deterministic value-mapping operator.

This separation avoids premature quantization and minimizes error accumulation in composed operations. It also clarifies the status of analytic notions such as derivatives and limits, which are not internalized within LM but can be treated as semantic constructs whose outcomes are subsequently represented in the bounded domain.

By distinguishing meaning from representation, LM preserves conceptual clarity while maintaining strict execution constraints.

\subsection{Bounded Computation and Memory (Implementation Principle)}

When Limited Math is implemented on a computer, an additional practical constraint applies: memory is finite. Under a fixed memory bound, the operational semantics of any LM program induces a finite state space. Consequently, every execution either halts or eventually repeats a previously visited state.

This observation does not resolve classical undecidability results, which rely on unbounded computation models. Instead, it reflects the operational reality of real machines and provides a foundation for precise reasoning about program behavior in bounded environments. In this setting, termination becomes decidable in principle, as executions can be exhaustively explored within the finite state space.

This bounded-memory principle applies only to implementation and does not constrain the abstract mathematical framework beyond what is necessary for practical applicability.

\subsection{Summary}

Together, these principles define Limited Math as a bounded, explicit, and internally consistent framework. By constraining magnitude, precision, structural complexity, and—where relevant—computation, LM eliminates hidden infinities and aligns mathematical reasoning with finite execution. These design choices are not intended to replace classical mathematics, but to provide a controlled environment in which mathematical structures and algorithms can be studied under explicit resource bounds.

\section{The Limited Math Framework}

This section presents the formal definition of Limited Math. We begin by defining the bounded numeric domain, followed by the value-mapping operator that enforces magnitude and precision constraints. We then introduce the restriction on set cardinality that bounds
structural complexity.

\subsection{Numeric Domain}

Limited Math is parameterized by a positive integer \(b \ge 1\), representing the number of bits allocated to each numeric component. We define
\[
M := 2^{b} - 1.
\]

The parameter \(M\) determines both the maximum representable magnitude and the minimum positive unit of precision.

\begin{definition}[Limited Math Numeric Domain]
The Limited Math numeric domain is defined as
\[
\mathcal{N}_M
=
\left\{
\frac{k}{M}
\;\middle|\;
k \in \mathbb{Z},\;
-M^2 \le k \le M^2
\right\}.
\]
\end{definition}

Each value in \(\mathcal{N}_M\) is an exact multiple of \(1/M\), with maximum absolute value \(M\). The total number of distinct representable values is finite and on the order of \(M^2\). The maximum absolute value is $M$, corresponding to $k = \pm M^2$.

This domain corresponds naturally to a symmetric fixed-point representation with bounded magnitude and uniform precision. In the special case \(b=1\), we obtain \(M=1\) and \(\mathcal{N}_1 = \{-1,0,1\}\), which forms the minimal nontrivial Limited Math system.

\subsection{Value-Mapping Operator}

To relate real-valued quantities to the bounded domain \(\mathcal{N}_M\), Limited Math
employs a deterministic value-mapping operator that enforces both saturation and
quantization.

\begin{definition}[Value Mapping]
The value-mapping operator \(\Phi_M : \mathbb{R} \rightarrow \mathcal{N}_M\) is defined by
\[
\Phi_M(x) =
\begin{cases}
\;\;\;\; M, & x \ge M,\\[4pt]
- M, & x \le -M,\\[4pt]
\frac{\lfloor M x \rfloor}{M}, & -M < x < M.
\end{cases}
\]
\end{definition}

The operator \(\Phi_M\) maps every real number to a representable Limited Math value, with
quantization inside the representable range and saturation at the boundaries. This definition
ensures that the semantic interpretation of arithmetic in Limited Math is total and
deterministic.

\paragraph{Semantic versus execution behavior.}
The value-mapping operator \(\Phi_M\) is a \emph{semantic} construct: it defines how
unbounded mathematical values are interpreted within the bounded LM domain. When
Limited Math is embedded in a concrete programming language or execution model,
boundary behavior need not be realized by saturation at runtime. Instead, saturation
serves as a semantic characterization of boundary outcomes, while implementations may
choose to signal boundary violations explicitly (e.g., by trapping) in order to enforce
assumed numeric bounds.

\subsection{Arithmetic Operations}

Arithmetic operations in Limited Math are defined by composing classical real-valued
operations with the value-mapping operator.

\begin{definition}[LM Arithmetic]
For \(x,y \in \mathcal{N}_M\), define:
\[
x \oplus y := \Phi_M(x + y),
\qquad
x \otimes y := \Phi_M(x \cdot y).
\]
\end{definition}

These operations ensure closure of \(\mathcal{N}_M\) under addition and multiplication.
When intermediate real-valued results remain within the bounded range \([-M,M]\) and
align with the representable grid, LM arithmetic coincides exactly with classical arithmetic.
When boundaries are reached, the semantic outcome is characterized by \(\Phi_M\); concrete
implementations may instead detect such boundary crossings and signal them explicitly.

\subsection{Bounded Set Cardinality}

In addition to bounding numeric values, Limited Math restricts the size of sets admitted by the theory.

\begin{definition}[Bounded Sets]
All sets in Limited Math are required to have cardinality at most \(M^2\).
\end{definition}

This restriction reflects the finite representational capacity of the numeric domain \(\mathcal{N}_M\), which itself contains \(O(M^2)\) distinct values. Admitting sets larger than this bound would introduce objects that cannot be indexed or distinguished internally using LM arithmetic.

Bounding set cardinality ensures that all collections are finite, enumerable, and compatible with bounded computation. It also prevents implicit infinitary constructions from entering the framework through unrestricted set formation.

\subsection{Discussion}

Together, the numeric domain \(\mathcal{N}_M\), the value-mapping operator \(\Phi_M\), and the bounded set cardinality define the core structure of Limited Math. These components establish a finite yet scalable mathematical universe in which magnitude, precision, and structural complexity are explicit parameters.

In subsequent sections, we build on this framework to define function and operator mapping, analyze algebraic properties under bounded evaluation, and study the computational implications of Limited Math in practical settings.

\section{Operator and Function Mapping}

A central goal of Limited Math is to provide a rigorous correspondence between classical mathematical semantics and bounded execution. To this end, LM distinguishes between the semantic definition of functions and operators, which remains classical, and their execution within the bounded numeric domain. This section formalizes this separation and defines how functions and operators are mapped into Limited Math.

\subsection{Function Mapping}

Let \( f : \mathbb{R} \rightarrow \mathbb{R} \) be a real-valued function with a classical mathematical definition. Limited Math associates with \(f\) a bounded counterpart that operates on the LM numeric domain.

\begin{definition}[Mapped Function]
Given a real-valued function \( f : \mathbb{R} \rightarrow \mathbb{R} \), its Limited Math mapping is the function
\[
f^{(M)} : \mathcal{N}_M \rightarrow \mathcal{N}_M
\]
defined by
\[
f^{(M)}(x) := \Phi_M\!\left( f(x) \right),
\quad x \in \mathcal{N}_M.
\]
\end{definition}

This definition ensures that every function application produces a value within the bounded domain. The mapping is deterministic and total, regardless of the behavior of \(f\) outside the representable range.

This separation between semantic definition and execution aligns with standard approaches to programming language semantics, in which meaning is defined independently of representation or evaluation strategy \cite{pierce2002tapl}.

\subsection{Composition Semantics}

Function composition requires particular care in a bounded setting, as naive intermediate quantization can introduce unnecessary error accumulation and break algebraic structure.

Let \(f,g : \mathbb{R} \rightarrow \mathbb{R}\) be real-valued functions. Classical composition is defined by \((f \circ g)(x) = f(g(x))\). Limited Math defines the mapped composition as follows.

\begin{definition}[Mapped Composition]
The Limited Math composition of \(f\) and \(g\) is defined by
\[
(f \circ g)^{(M)}(x)
:=
\Phi_M\!\left( f(g(x)) \right),
\quad x \in \mathcal{N}_M.
\]
\end{definition}

Importantly, quantization is applied only once, after the full real-valued composition has been evaluated. Intermediate results are not snapped to the LM grid.

This \emph{snap-at-end} semantics minimizes quantization error and preserves semantic associativity:
\[
\bigl((f \circ g) \circ h\bigr)^{(M)}
=
\bigl(f \circ (g \circ h)\bigr)^{(M)}
=
\Phi_M\!\left( f(g(h(x))) \right).
\]

\subsection{Arithmetic Operators as Mapped Functions}

Arithmetic operators in Limited Math are special cases of the general mapping principle.

Let \(+\) and \(\cdot\) denote classical real-valued addition and multiplication. Their LM counterparts are defined by:
\[
x \oplus y := \Phi_M(x + y),
\qquad
x \otimes y := \Phi_M(x \cdot y),
\quad x,y \in \mathcal{N}_M.
\]

These definitions coincide with those introduced in Section~3 and are consistent with the function-mapping framework. Arithmetic operations are thus interpreted as classical operations followed by a single mapping into the bounded domain.

\subsection{In-Range Equivalence and Boundary Effects}

When intermediate real-valued results remain within the representable range and align with the LM grid, the mapping operator acts as the identity.

\begin{proposition}[In-Range Equivalence]
Let \(x,y \in \mathcal{N}_M\). If \(x + y \in \mathcal{N}_M\), then
\[
x \oplus y = x + y.
\]
Similarly, if \(x \cdot y \in \mathcal{N}_M\), then
\[
x \otimes y = x \cdot y.
\]
\end{proposition}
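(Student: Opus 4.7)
The plan is to reduce both claims to a single key fact: the value-mapping operator \(\Phi_M\) acts as the identity on every element of \(\mathcal{N}_M\). Once this is established, the proposition follows immediately by applying it with \(z = x+y\) (respectively \(z = x \cdot y\)) and using the definitions \(x \oplus y := \Phi_M(x+y)\) and \(x \otimes y := \Phi_M(x \cdot y)\) from Section~3.

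To prove \(\Phi_M(z) = z\) for \(z \in \mathcal{N}_M\), I would perform a case analysis matching the piecewise definition of \(\Phi_M\). Since \(\mathcal{N}_M \subseteq [-M, M]\), the only possibilities for \(z\) are \(z = M\), \(z = -M\), or \(-M < z < M\). The two boundary cases are immediate: the first and second clauses of \(\Phi_M\) return exactly \(M\) and \(-M\), respectively. For the interior case, the defining constraint of \(\mathcal{N}_M\) gives \(z = k/M\) for some integer \(k\) with \(-M^2 \le k \le M^2\); hence \(Mz = k \in \mathbb{Z}\), so \(\lfloor Mz \rfloor = k\) and \(\Phi_M(z) = k/M = z\).

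From here the proposition is essentially a substitution. Under the hypothesis \(x + y \in \mathcal{N}_M\), the identity fact applied to \(z = x+y\) yields \(\Phi_M(x+y) = x+y\), which is exactly the statement \(x \oplus y = x + y\). The multiplicative claim is identical with \(z = x \cdot y\). Note that the hypothesis \(x \cdot y \in \mathcal{N}_M\) already encodes both conditions typically needed for coincidence with classical arithmetic, namely that the real-valued product lies in \([-M, M]\) and that it aligns with the grid of multiples of \(1/M\); no separate magnitude argument is needed.

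There is no real obstacle here; the proposition is a direct unfolding of definitions. The only subtlety worth flagging is that the ``alignment with the grid'' condition is tacitly bundled into the hypothesis \(x+y \in \mathcal{N}_M\), so it is important to note that membership in \(\mathcal{N}_M\), not merely membership in the interval \([-M, M]\), is what makes the floor operation trivial in the interior case. Making this explicit avoids the misreading that \(x \oplus y = x + y\) holds whenever overflow is absent, when in fact it requires the sum itself to be exactly representable.
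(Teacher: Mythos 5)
Your proof is correct and follows exactly the route the paper intends: the paper states this proposition without a written proof, justifying it only by the preceding remark that ``the mapping operator acts as the identity'' on representable values, which is precisely the key fact you establish. Your explicit case analysis of \(\Phi_M\) on \(\mathcal{N}_M\) (boundary values versus the interior case where \(Mz = k \in \mathbb{Z}\) makes the floor trivial) simply fills in the detail the paper leaves implicit, and your closing caveat that membership in \(\mathcal{N}_M\), not merely in \([-M,M]\), is what is needed is an accurate and worthwhile clarification.
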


Outside this in-range regime, saturation and quantization introduce explicit deviations from classical arithmetic. These boundary effects are deterministic and analyzable, and they replace implicit overflow or rounding behavior commonly encountered in practical computation.

\subsection{Higher-Order Operators}

The mapping principle extends naturally to higher-order operators. Let
\[
\mathcal{O} : (\mathbb{R} \rightarrow \mathbb{R})^k \rightarrow (\mathbb{R} \rightarrow \mathbb{R})
\]
be a classical operator, such as differentiation or integration. Limited Math defines the mapped operator by:
\[
\mathcal{O}^{(M)}(f_1,\ldots,f_k)(x)
:=
\Phi_M\!\left(
\mathcal{O}(f_1,\ldots,f_k)(x)
\right),
\quad x \in \mathcal{N}_M.
\]

This approach preserves the classical semantic meaning of operators while enforcing bounded execution. Operators that rely on infinitary constructions are not internalized within LM; rather, their outcomes are mapped into the bounded domain when needed.

\subsection{Discussion}

The operator and function mapping framework provides a principled bridge between classical mathematics and bounded computation. By separating semantic definition from execution, Limited Math avoids premature quantization, preserves algebraic structure when possible, and makes deviations explicit when bounds are reached.

This design is particularly well suited to computational reasoning, where functions are specified abstractly but executed under finite precision and resource constraints.

This semantic separation mirrors the distinction between denotational meaning and operational realization commonly adopted in programming language design.

\section{Analysis of Algebraic Properties}

This section analyzes the algebraic behavior of Limited Math operations. We distinguish between an \emph{in-range regime}, where intermediate results remain within representable bounds, and a \emph{boundary regime}, where saturation or quantization occurs. This distinction clarifies which classical algebraic laws are preserved and which are necessarily broken in a bounded setting.

\subsection{In-Range Regime}

We say that an expression is evaluated \emph{in-range} if all intermediate real-valued results lie within the representable interval \([-M, M]\) and align with the LM numeric grid.

\begin{definition}[In-Range Evaluation]
An LM expression is said to be evaluated in-range if, during its evaluation under the snap-at-end semantics, all intermediate real-valued results belong to \(\mathcal{N}_M\).
\end{definition}

Within this regime, the value-mapping operator acts as the identity, and LM arithmetic coincides with classical arithmetic.

\begin{theorem}[Preservation of Addition Laws In-Range]
Let \(x,y,z \in \mathcal{N}_M\). If \(x+y\), \(y+z\), and \(x+y+z\) all belong to \(\mathcal{N}_M\), then:
\begin{itemize}[leftmargin=1.5em]
  \item \(x \oplus y = y \oplus x\) (commutativity),
  \item \((x \oplus y) \oplus z = x \oplus (y \oplus z)\) (associativity).
\end{itemize}
\end{theorem}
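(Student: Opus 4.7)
My plan is to reduce both algebraic laws to their classical real-number counterparts using the In-Range Equivalence proposition from the previous section, which states that $u \oplus v = u + v$ whenever $u, v \in \mathcal{N}_M$ and $u + v \in \mathcal{N}_M$. The hypotheses of the theorem are precisely the conditions needed to license repeated applications of that proposition.

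For commutativity, I would invoke In-Range Equivalence twice: once to conclude $x \oplus y = x + y$ from $x+y \in \mathcal{N}_M$, and once to conclude $y \oplus x = y + x$ from the fact that $y+x$ equals $x+y$ as a real number and therefore also lies in $\mathcal{N}_M$. Classical commutativity of real addition then identifies the two right-hand sides in a single step.

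For associativity, the strategy is the same but requires unfolding each side by two applications of In-Range Equivalence. On the left, $x \oplus y = x+y \in \mathcal{N}_M$ lets me rewrite $(x \oplus y) \oplus z$ as $(x+y) \oplus z$, and the hypothesis $x+y+z \in \mathcal{N}_M$ then licenses a second application giving the real value $x+y+z$. The right side proceeds symmetrically, using $y \oplus z = y+z$ and the fact that $x + (y+z)$ equals $x+y+z$, which again lies in $\mathcal{N}_M$. Classical associativity in $\mathbb{R}$ completes the identification.

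I expect no genuine obstacle here: the theorem is essentially a bookkeeping exercise showing that the in-range hypotheses are strong enough to push every intermediate quantity through $\Phi_M$ unchanged. The only care required is to verify, at each rewriting step, that the specific sum being mapped belongs to $\mathcal{N}_M$; this is exactly what the three hypotheses $x+y, y+z, x+y+z \in \mathcal{N}_M$ provide. Once the in-range equivalences are in place, the classical field axioms for $\mathbb{R}$ do all the remaining work.
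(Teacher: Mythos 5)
Your proposal is correct and follows essentially the same route as the paper's proof: observe that each hypothesis forces the relevant sum into \(\mathcal{N}_M\), so \(\Phi_M\) acts as the identity at every application of \(\oplus\), and then classical commutativity and associativity of real addition finish the argument. Your version is simply a more explicit unfolding of the paper's terse ``and similarly for all intermediate sums.''
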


\begin{proof}
Under the stated conditions, no saturation or quantization occurs. By definition,
\[
x \oplus y = \Phi_M(x+y) = x+y,
\]
and similarly for all intermediate sums. Classical associativity and commutativity therefore apply directly.
\end{proof}

An analogous result holds for multiplication when products remain in-range and representable.

\subsection{Boundary Regime and Law Breaking}

When intermediate results exceed representable bounds or fall between grid points, the value-mapping operator introduces saturation or quantization. In this regime, certain classical algebraic laws no longer hold.

\begin{proposition}[Failure of Associativity under Saturation]
There exist \(x,y,z \in \mathcal{N}_M\) such that
\[
(x \oplus y) \oplus z \neq x \oplus (y \oplus z).
\]
\end{proposition}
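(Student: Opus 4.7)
The plan is to prove this existence statement by exhibiting a concrete witness in which one grouping saturates at the boundary while the other allows two values to cancel before the sum exceeds the bound. Since $M \ge 1$ in every instantiation of the framework, the three values $x = M$, $y = M$, $z = -M$ are all elements of $\mathcal{N}_M$ (they correspond to $k = \pm M^2$ in the defining condition of the numeric domain), and they are the canonical candidates: the pair $(x,y)$ sums to $2M$, which lies strictly above the saturation threshold, whereas the pair $(y,z)$ sums to $0$, which is representable exactly.

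The key steps are then purely computational and direct from the definitions of $\oplus$ and $\Phi_M$. First I would evaluate the left-associated expression: $x \oplus y = \Phi_M(M + M) = \Phi_M(2M) = M$ by the saturation clause, so $(x \oplus y) \oplus z = M \oplus (-M) = \Phi_M(0) = 0$. Next I would evaluate the right-associated expression: $y \oplus z = \Phi_M(M + (-M)) = \Phi_M(0) = 0$, so $x \oplus (y \oplus z) = M \oplus 0 = \Phi_M(M) = M$. Comparing, $0 \neq M$ since $M \ge 1$, which gives the desired inequality.

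There is no genuine obstacle here; the only point that merits a remark is that the argument really relies on the boundary clause of $\Phi_M$, not on quantization. The in-range evaluation theorem from the preceding subsection guarantees that without saturation both sides would agree, so one must drive at least one intermediate sum outside $[-M, M]$; choosing $x + y = 2M$ achieves this with the minimal amount of machinery. I would also briefly note that the same witness works in the minimal nontrivial system $\mathcal{N}_1 = \{-1, 0, 1\}$, which makes the counterexample independent of the specific choice of bit-width $b$.
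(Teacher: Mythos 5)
Your proof is correct and uses exactly the same witness and computation as the paper's own proof (namely \(x = M\), \(y = M\), \(z = -M\), with the left grouping saturating to \(M\) before cancellation and the right grouping cancelling first). The additional remarks about \(M \ge 1\) and the \(\mathcal{N}_1\) case are harmless elaborations on the same argument.
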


\begin{proof}
Let \(x = M\), \(y = M\), and \(z = -M\). Then:
\[
(x \oplus y) \oplus z
=
\Phi_M(\Phi_M(2M) - M)
=
\Phi_M(M - M)
=
0,
\]
while
\[
x \oplus (y \oplus z)
=
\Phi_M(M + \Phi_M(0))
=
M.
\]
\end{proof}

This example demonstrates that associativity fails once saturation is involved. Importantly, the failure is explicit and deterministic.

\subsection{Distributivity}

Distributivity between addition and multiplication also holds only in the in-range regime.

\begin{proposition}[Conditional Distributivity]
Let \(x,y,z \in \mathcal{N}_M\). If \(x(y+z)\), \(xy\), and \(xz\) all lie in \(\mathcal{N}_M\), then
\[
x \otimes (y \oplus z) = (x \otimes y) \oplus (x \otimes z).
\]
\end{proposition}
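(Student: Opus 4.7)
The plan is to unfold both sides using the definitions of $\oplus$ and $\otimes$ and then invoke the In-Range Equivalence proposition to discharge each occurrence of $\Phi_M$. Once every $\Phi_M$ has been shown to act as the identity, the equality reduces to the classical distributive law $x(y+z)=xy+xz$ in $\mathbb{R}$.

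First I would expand: the left-hand side becomes $\Phi_M\!\bigl(x\cdot\Phi_M(y+z)\bigr)$ and the right-hand side becomes $\Phi_M\!\bigl(\Phi_M(xy)+\Phi_M(xz)\bigr)$. By the hypothesis $xy, xz \in \mathcal{N}_M$, the two inner applications on the right collapse to $xy$ and $xz$ via In-Range Equivalence. On the left I need the additional fact $y+z\in\mathcal{N}_M$, so that the inner $\Phi_M(y+z)$ also collapses to $y+z$; this should be read as implicit in the intended \emph{in-range} hypothesis (it is automatic whenever $|x|\ge 1$, and more generally it is part of what "all relevant intermediates representable" is meant to convey). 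After these inner collapses, both sides reduce to $\Phi_M$ applied to a real number equal to $x(y+z)=xy+xz$, which lies in $\mathcal{N}_M$ by hypothesis. A final application of In-Range Equivalence strips the outer $\Phi_M$, and classical distributivity yields the claimed equality.

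The main obstacle is pure bookkeeping: one must verify that every real argument fed into $\Phi_M$ is genuinely in $\mathcal{N}_M$, so the operator is the identity at each step. There is no real interaction between quantization and distributivity in this regime---the point of the proposition is precisely to isolate the classical behavior from the bounded one. Strengthening the hypotheses to include $y+z\in\mathcal{N}_M$ explicitly, or equivalently to assert that the entire evaluation is in-range in the sense of the earlier definition, would make the argument wholly mechanical.
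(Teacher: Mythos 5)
The paper states this proposition without any proof, so there is nothing to compare against directly; your unfold-and-collapse argument via In-Range Equivalence is the natural (and surely intended) one. More importantly, the gap you flag is real and is a defect of the \emph{statement}, not merely of your bookkeeping: the hypothesis \(y+z\in\mathcal{N}_M\) cannot be derived from the stated assumptions, and without it the proposition is false. Concretely, take \(b=2\), so \(M=3\), and let \(x=1/3\), \(y=z=3\). Then \(xy=xz=1\) and \(x(y+z)=2\) all lie in \(\mathcal{N}_3\), so the hypotheses hold; but \(y\oplus z=\Phi_3(6)=3\) by saturation, giving
\[
x\otimes(y\oplus z)=\Phi_3\!\left(\tfrac{1}{3}\cdot 3\right)=1,
\qquad
(x\otimes y)\oplus(x\otimes z)=\Phi_3(1+1)=2.
\]
Your observation that the missing hypothesis is automatic when \(|x|\ge 1\) is also correct, since then \(|y+z|\le|x(y+z)|\le M\) and \(y+z\) is already a multiple of \(1/M\). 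So your proof is sound once the hypothesis \(y+z\in\mathcal{N}_M\) (equivalently, full in-range evaluation in the sense of the paper's earlier definition) is added; as stated, the proposition needs that repair, and your proposal is the correct way to both identify and close the hole.
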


Outside this regime, quantization or saturation may cause distributivity to fail. This behavior reflects the bounded nature of the numeric domain and replaces implicit overflow behavior with explicit semantics.

\subsection{Cancellation and Order Sensitivity}

Classical cancellation laws do not generally hold in Limited Math.

\begin{proposition}[Failure of Cancellation]
There exist \(x,y,z \in \mathcal{N}_M\) such that
\[
x \oplus y = x \oplus z
\quad \text{but} \quad
y \neq z.
\]
\end{proposition}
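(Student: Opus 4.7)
The plan is to exhibit an explicit triple $(x,y,z) \in \mathcal{N}_M^3$ that witnesses cancellation failure, exploiting the saturation branch of $\Phi_M$. The intuition is that once $x \oplus y$ saturates at the boundary $M$, the output value becomes insensitive to the exact size of the summand, so any two distinct summands that both push the sum past the saturation threshold will yield the same LM result.

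Concretely, I would take $x = M$ and let $y$ and $z$ be two \emph{distinct} strictly positive elements of $\mathcal{N}_M$, for example $y = 1/M$ and $z = 2/M$ (both are in $\mathcal{N}_M$ since they correspond to $k = 1$ and $k = 2$, well within the allowed range $-M^2 \le k \le M^2$). First I would check that $x + y = M + 1/M \ge M$ and $x + z = M + 2/M \ge M$, so that both fall into the upper saturation case of the definition of $\Phi_M$. Then by definition,
\[
x \oplus y \;=\; \Phi_M(M + 1/M) \;=\; M \;=\; \Phi_M(M + 2/M) \;=\; x \oplus z,
\]
while clearly $y = 1/M \neq 2/M = z$. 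This gives the required witness.

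There is essentially no hard step: the only thing to be careful about is staying inside $\mathcal{N}_M$ when choosing $y$ and $z$, and making sure both sums genuinely cross the saturation threshold rather than just sitting on the boundary in different ways. The choice $y = 1/M$, $z = 2/M$ works uniformly for every $b \ge 1$ (so for every $M = 2^b - 1 \ge 1$), which keeps the statement valid across all instantiations of the framework. One could equally well illustrate the failure through the lower saturation branch or through a mixture of saturation with the existing associativity-failure example, but a single minimal upper-saturation witness suffices and makes the deterministic, boundary-driven nature of the failure transparent.
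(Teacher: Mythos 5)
Your saturation-based witness is exactly the style of argument the paper intends here (the paper states this proposition without proof, but its surrounding text attributes the failure to ``saturation or quantization,'' and its proof of the associativity-failure proposition uses the same upper-saturation trick with $x=y=M$). The core argument is correct for $M \ge 3$: with $x=M$, both $M+1/M$ and $M+2/M$ fall into the branch $\Phi_M(t)=M$ for $t \ge M$, so $x\oplus y = x\oplus z = M$ while $y\neq z$.

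One small error: your claim that the witness ``works uniformly for every $b \ge 1$'' fails at $b=1$. There $M=1$ and $\mathcal{N}_1=\{-1,0,1\}$ (the index $k$ ranges over $-M^2 \le k \le M^2$, i.e.\ $-1\le k\le 1$), so $z = 2/M = 2$ is not an element of the domain. The fix is trivial: take $y=0$ and $z=1/M$ instead. Then $x+y=M$ and $x+z=M+1/M$ both satisfy $t\ge M$, so both map to $M$ under $\Phi_M$, and $y\neq z$; this witness lies in $\mathcal{N}_M$ for every $M\ge 1$, including the minimal system.
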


Such failures arise from saturation or quantization and imply that expression evaluation order can affect outcomes once boundary effects are present.

\subsection{Interpretation}

The loss of certain algebraic laws in the boundary regime is not a defect of Limited Math but a consequence of making bounds explicit. In classical mathematics, such boundary behavior is excluded by assumption; in practical computation, it is often hidden or undefined.

Limited Math exposes these effects directly and confines them to well-defined regions of computation. Within the in-range regime, classical reasoning applies unchanged. Outside it, deviations are explicit, deterministic, and analyzable.

\subsection{Summary}

Limited Math supports a two-tier algebraic interpretation:
\begin{itemize}[leftmargin=1.5em]
  \item \emph{In-range}, where LM arithmetic coincides with classical algebra;
  \item \emph{At boundaries}, where saturation and quantization introduce controlled deviations.
\end{itemize}

This structure enables precise reasoning about when classical algebraic laws may be safely applied and when bounded effects must be taken into account.

\section{Derivatives and Analytic Operators}

Analytic concepts such as limits, derivatives, and integrals play a central role in classical mathematics. However, these notions fundamentally rely on infinitary constructions and arbitrarily small perturbations. Since Limited Math operates over a finite, discretized domain with explicit bounds on precision and magnitude, such constructions cannot be internalized without violating the design principles of the framework.

This section clarifies the status of analytic operators in Limited Math and presents a principled approach for incorporating them without compromising rigor or stability.

\subsection{Why Derivatives Are Not Intrinsic to Limited Math}

In classical analysis, the derivative of a function \(f\) at a point \(x\) is defined as a limit:
\[
f'(x) = \lim_{h \to 0} \frac{f(x+h) - f(x)}{h}.
\]

This definition presupposes the existence of arbitrarily small nonzero values of \(h\) and an infinite process of refinement. In Limited Math, by contrast:
\begin{itemize}[leftmargin=1.5em]
  \item the numeric domain is finite and discrete;
  \item the minimum positive unit is fixed at \(1/M\);
  \item limits over infinite sequences are not expressible.
\end{itemize}

As a consequence, the classical derivative cannot be defined intrinsically within the LM numeric domain. Attempting to approximate derivatives using finite differences on the discretized grid leads to instability and artifacts, particularly when functions are quantized. For example, linear functions with small slopes may appear locally constant under grid-based differencing.

This limitation is not accidental but structural: any intrinsic derivative operator satisfying linearity, locality, and scale invariance would reintroduce infinitary assumptions incompatible with the bounded nature of LM.

\subsection{Semantic Interpretation of Analytic Operators}

Rather than internalizing analytic operators, Limited Math treats them as \emph{semantic constructs} defined at the level of classical mathematics. Their results are subsequently mapped into the bounded domain using the value-mapping operator.

\begin{definition}[Mapped Derivative]
Let \(f : \mathbb{R} \rightarrow \mathbb{R}\) be a differentiable function in the classical sense. The Limited Math derivative of \(f\) is defined by
\[
(Df)^{(M)}(x)
:=
\Phi_M\!\left( f'(x) \right),
\quad x \in \mathcal{N}_M.
\]
\end{definition}

This definition preserves the classical meaning of differentiation while ensuring that derivative values remain representable within the LM domain. Quantization is applied only once, at the final stage, avoiding instability due to repeated snapping.

\subsection{Stability Considerations}

The semantic mapping approach avoids common pathologies associated with discrete differentiation. For instance, consider the linear function \(f(x) = 0.3x\). Under naive finite-difference schemes applied directly to the LM-mapped function, small increments may be lost to quantization, yielding a derivative of zero over large regions. Under the mapped derivative definition, however, the classical derivative \(f'(x) = 0.3\) is preserved and represented exactly whenever \(0.3 \in \mathcal{N}_M\).

This illustrates a general principle: analytic meaning is preserved when differentiation is treated semantically rather than operationally within the bounded domain.

\subsection{Extension to Other Analytic Operators}

The same approach applies to other analytic operators, such as integration or higher-order differentiation. Let
\[
\mathcal{O} : (\mathbb{R} \rightarrow \mathbb{R})^k \rightarrow (\mathbb{R} \rightarrow \mathbb{R})
\]
be a classical analytic operator. Limited Math defines its bounded counterpart by
\[
\mathcal{O}^{(M)}(f_1,\ldots,f_k)(x)
=
\Phi_M\!\left(
\mathcal{O}(f_1,\ldots,f_k)(x)
\right),
\quad x \in \mathcal{N}_M.
\]

This unified treatment ensures consistency across analytic constructions while maintaining the boundedness and determinism of LM, without reintroducing infinitary assumptions into the numeric domain.

\section{Bounded Computation and Memory}

The preceding sections define Limited Math as a bounded mathematical framework at the level of values, functions, and structures. In this section, we consider the implications of these bounds when Limited Math is implemented on a computer. In particular, we examine the effect of imposing an explicit bound on memory and show how this leads to a finite-state execution model with strong theoretical guarantees.

In this section, we distinguish between execution \emph{semantics}, which describes the meaning of program behavior, and execution \emph{models}, which provide concrete mathematical structures realizing that behavior.

\subsection{Bounded-Memory Execution Model}

Real computational systems operate with finite memory. While classical models of computation, such as Turing machines, assume unbounded tapes, such assumptions are abstractions that do not hold in practice. When Limited Math is implemented on a computer, all data structures, registers, and control states are necessarily bounded.

We therefore consider an execution model in which:
\begin{itemize}[leftmargin=1.5em]
  \item numeric values range over the finite domain \(\mathcal{N}_M\);
  \item all sets have bounded cardinality;
  \item total memory usage is bounded by a fixed finite limit.
\end{itemize}

Under these assumptions, the global execution state of an LM program consists of a finite collection of components, including the program counter, bounded memory contents, and any auxiliary registers or stacks. The total number of such states is finite.

\subsection{Finite-State Execution Semantics}

Let \(\Sigma\) denote the set of all possible global execution states of a bounded-memory LM program. Since each component of the state is drawn from a finite set, \(\Sigma\) itself is finite.

Program execution induces a deterministic transition function
\[
\delta : \Sigma \rightarrow \Sigma \cup \{\mathsf{HALT}\},
\]
where \(\mathsf{HALT}\) denotes a terminal halting state.

Finite-state transition systems of this form are central to verification techniques such as model checking, where reachability and termination properties are analyzed over bounded state spaces \cite{clarke1999model}.

\subsection{Termination-or-Cycle Property}

The finiteness of the state space immediately yields the following result.

\begin{theorem}[Termination-or-Cycle]
For any deterministic Limited Math program executed under a fixed memory bound and any fixed input, the execution either halts after a finite number of steps or eventually revisits a previously encountered global state and thereafter evolves cyclically.
\end{theorem}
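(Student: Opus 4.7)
The plan is to argue by a direct pigeonhole principle on the finite state space, using the determinism of $\delta$ to upgrade a single state repetition into eventual cyclicity. First I would fix an input and the corresponding initial global state $s_0 \in \Sigma$, and consider the orbit
\[
s_0,\; s_1,\; s_2,\; \ldots,\quad \text{where } s_{i+1} = \delta(s_i),
\]
continued as long as $s_i \neq \mathsf{HALT}$. This immediately gives a clean case split: either some $s_n = \mathsf{HALT}$, in which case the execution halts after $n$ steps and the first disjunct of the theorem holds, or the sequence $(s_i)_{i \ge 0}$ is infinite and entirely contained in $\Sigma$.

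In the second case, I would apply the pigeonhole principle: since $\Sigma$ is finite (as established in the preceding subsection, where every component of the global state is drawn from a finite set), the infinite sequence $(s_i)$ cannot consist of pairwise distinct elements. Hence there exist indices $i < j$ with $s_i = s_j$. I would then take $p := j - i > 0$ as the candidate period and $i$ as the entry point of the cycle.

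The second key step is to promote this single coincidence into genuine cyclic behaviour, and this is where determinism of $\delta$ does the essential work. By induction on $k \ge 0$, assuming $s_{i+k} = s_{j+k}$, applying $\delta$ to both sides gives $s_{i+k+1} = \delta(s_{i+k}) = \delta(s_{j+k}) = s_{j+k+1}$, so $s_{n+p} = s_n$ for every $n \ge i$. This shows that from step $i$ onward the execution revisits previously encountered states in a periodic pattern of period dividing $p$, which is precisely the second disjunct. I would close by noting that the argument uses nothing beyond finiteness of $\Sigma$ and the functional (deterministic, single-valued) nature of $\delta$, so no assumption about LM-specific arithmetic is needed here.

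The argument involves no real obstacle: the only subtle point is conceptual rather than technical, namely ensuring that \emph{eventually cyclic} is interpreted as the existence of some prefix followed by a purely periodic suffix, rather than requiring periodicity from step $0$. I would therefore make this distinction explicit in the statement of the induction, so that readers do not expect $s_0$ itself to lie on the cycle when the transient prefix $s_0, \ldots, s_{i-1}$ is nonempty.
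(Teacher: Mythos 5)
Your proposal is correct and follows essentially the same argument as the paper's proof: pigeonhole on the finite state space $\Sigma$ to obtain a repeated state, then determinism of $\delta$ to conclude periodic behaviour thereafter. Your version merely spells out the inductive step $s_{i+k}=s_{j+k}\Rightarrow s_{i+k+1}=s_{j+k+1}$ that the paper leaves implicit.
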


\begin{proof}
Since the set of global states \(\Sigma\) is finite, any infinite execution must visit some state more than once. Determinism of the transition function implies that once a state is revisited, the subsequent execution repeats the same sequence of states, forming a cycle. If no such repetition occurs, the execution must reach the halting state.
\end{proof}

\subsection{Decidability of Termination in the Bounded Setting}

An immediate consequence of the termination-or-cycle property is that termination is decidable within the bounded execution model.

Given a fixed LM program and input, one may simulate execution while recording visited states. If execution reaches the halting state, the program terminates. If a previously visited state is encountered, the execution will loop indefinitely and never halt.

This result does not contradict classical undecidability theorems, such as the Halting Problem, which apply to computation models with unbounded memory. Instead, it reflects the fact that bounded-memory systems correspond to finite-state machines, for which reachability and termination properties are decidable in principle.

\subsection{Practical Interpretation}

While the state space of a bounded-memory LM program may be extremely large in practice, the theoretical guarantee of termination-or-cycle behavior has important conceptual implications. It ensures that:
\begin{itemize}[leftmargin=1.5em]
  \item nontermination cannot arise from unbounded state growth;
  \item infinite execution corresponds to explicit cyclic behavior;
  \item program behavior is fully determined by a finite transition system.
\end{itemize}

These properties are particularly relevant to verification, safety-critical systems, and long-running computations, where understanding and controlling nontermination is essential.

\subsection{Summary}

By imposing explicit bounds on memory in addition to numeric and structural constraints, Limited Math yields finite-state execution semantics for implemented programs. Within this semantics, every computation either halts or eventually cycles, and termination becomes decidable in principle. This result reinforces the role of Limited Math as a framework that aligns mathematical reasoning with the realities of finite computation, without appealing to idealized unbounded models.

\section{Applications and Case Studies}

This section illustrates how Limited Math can be applied across a range of computational settings. Rather than aiming for exhaustive coverage, we focus on representative scenarios that highlight how explicit bounds on magnitude, precision, structure, and computation lead to clearer semantics and predictable behavior.

\subsection{Minimal System: Qualitative Reasoning with \texorpdfstring{$M=1$}{M=1}}

The smallest nontrivial Limited Math system arises when \(b=1\), yielding \(M=1\) and
\[
\mathcal{N}_1 = \{-1, 0, 1\}.
\]

Despite its simplicity, this system is useful for qualitative reasoning and decision-making. The three values may be interpreted as negative, neutral, and positive states, or as rejection, uncertainty, and acceptance. Arithmetic operations reduce to saturating addition and sign multiplication, yielding deterministic and interpretable behavior.

This minimal system is well suited to:
\begin{itemize}[leftmargin=1.5em]
  \item three-valued logic and rule evaluation;
  \item coarse-grained decision systems;
  \item control logic based on directional feedback;
  \item explainable reasoning in constrained environments.
\end{itemize}

The \(M=1\) case demonstrates that Limited Math is not merely a numerical approximation framework, but a family of systems that includes purely qualitative computation as a limiting case.

\subsection{Choosing \texorpdfstring{$M$}{M} by Problem Scale}

In practical applications, the parameter \(M\) is selected according to the scale and precision requirements of the problem domain. Smaller values of \(M\) yield coarse but robust computation, while larger values provide finer resolution at increased representational cost.

For example:
\begin{itemize}[leftmargin=1.5em]
  \item Small discrete systems (e.g., counters, team sizes, policy thresholds) may require only modest precision, making values such as \(M=15\) or \(M=31\) sufficient.
  \item General-purpose numerical computation can be supported with moderate values such as \(M=255\), corresponding to uniform fixed-point precision with resolution \(1/255\).
  \item Scientific and engineering applications often tolerate measurement uncertainty far larger than the minimum LM unit for \(M=1023\) or \(M=65535\), making such values adequate for many tasks.
\end{itemize}

This explicit parameterization contrasts with floating-point arithmetic, where precision and range are fixed by hardware and often opaque to users.

\subsection{Illustrative Instantiation: A C-Based Arithmetic Model}

Limited Math is intended as a semantic framework rather than a concrete programming
language design. Nevertheless, instantiating the framework within a familiar execution
model helps clarify how its semantic principles interact with finite representations and
bounded arithmetic. To this end, we consider an illustrative instantiation based on the
arithmetic model of the C programming language, whose treatment of fixed-width integers
closely reflects common hardware behavior. Notably, the C standard leaves signed integer
overflow undefined, while specifying other aspects of arithmetic behavior explicitly
\cite{iso9899}.

\paragraph{Fixed-point instantiation.}
We instantiate Limited Math using a signed fixed-point representation with 8 bits for
integer magnitude and 8 bits for fractional precision (Q8.8). Values are stored as
16-bit two’s-complement integers and interpreted as
\(
\mathsf{val}(x) = x / 2^8.
\)
This yields a finite numeric domain consistent with the LM framework, with uniform
precision and explicit bounds on magnitude.

\paragraph{Addition semantics.}
For two Q8.8 values \(x,y\), addition is performed by widening both operands to a
larger integer type, computing the sum, and checking whether the result lies within
the representable range. If so, the result is returned unchanged; otherwise, execution
\emph{traps}. Semantically, this corresponds to classical addition followed by a
single boundary check, mirroring the behavior of a hardware adder equipped with
overflow detection.

\paragraph{Multiplication semantics.}
Multiplication widens operands to produce a 32-bit intermediate product, which is
then rescaled by shifting right by 8 bits to restore Q8.8 precision. Optional symmetric
rounding may be applied prior to shifting. If the rescaled result lies outside the
representable range, execution \emph{traps}; otherwise, the result is returned as a
16-bit value. This reflects the standard widen--multiply--rescale datapath used in
fixed-point arithmetic.

\paragraph{Trap-based boundary policy.}
Unlike conventional fixed-point arithmetic, which often employs wraparound or
saturation, Limited Math adopts a \emph{trap-on-boundary} policy. Under this
interpretation, the bound \(M\) is assumed to be sufficient for all intended semantic
evaluations. Reaching the numeric boundary therefore signals a violation of the
assumed semantic domain rather than an alternative arithmetic outcome.

\paragraph{Semantic interpretation.}
Under this instantiation, LM arithmetic operators correspond to classical real-valued
operations followed by a single semantic boundary check. When all intermediate results
remain in-range, LM arithmetic coincides exactly with classical arithmetic. When a
boundary is exceeded, deviation is explicit and deterministic. This illustrates how
LM separates mathematical meaning from bounded evaluation while preserving classical
behavior within representable limits.

This example demonstrates that Limited Math can be instantiated within familiar
finite arithmetic models without introducing new representations or execution
mechanisms. Its purpose is illustrative: to show how bounded semantic interpretation
can be realized concretely while remaining independent of any particular programming
language design or implementation strategy.

\subsection{Bounded Computation and Verification}

When combined with bounded memory, Limited Math yields a finite-state execution model. This has direct implications for verification and analysis of programs.

In safety-critical and long-running systems, nontermination and uncontrolled state growth are major concerns. Under the LM execution model, every computation either halts or enters a cycle. This enables:
\begin{itemize}[leftmargin=1.5em]
  \item decidable termination analysis within bounded settings;
  \item explicit detection of cyclic behavior;
  \item exhaustive exploration of execution states in principle.
\end{itemize}

While the resulting state spaces may be large, the theoretical guarantees support precise reasoning about program behavior in constrained environments.

\subsection{Conceptual Alignment with Physical Bounds}

Although Limited Math is motivated by computation rather than physics, it is conceptually compatible with the notion that physical systems exhibit both minimum and maximum meaningful scales. For example, physical theories suggest a lower bound on spatial resolution and an upper bound on observable extent.

Interpreted cautiously, this perspective aligns with the bounded nature of LM: both reject unexamined infinities in favor of explicit limits. No claim is made that LM can simulate physical reality in full detail; rather, the analogy highlights the broader relevance of bounded mathematical reasoning.

\subsection{Summary}

These case studies demonstrate that Limited Math applies across a spectrum of computational contexts, from qualitative reasoning to numerical programming and bounded verification. By making bounds explicit and central to the theory, LM provides clearer semantics, predictable behavior, and a unified framework for reasoning about finite computation.

\section{Discussion, Related Work, and Scope}

Limited Math intersects with several established areas of research, including finite-precision arithmetic, bounded computation, and formal models of computation. This section situates LM within this landscape, clarifies its relationship to existing approaches, and discusses its intended scope and limitations.

\subsection{Relation to Floating-Point and Fixed-Point Arithmetic}

Finite-precision arithmetic has been studied extensively in both theory and practice,
spanning numerical analysis, programming language design, and hardware architecture.
Floating-point arithmetic, standardized by IEEE~754, is the dominant representation in
modern programming languages and hardware platforms \cite{ieee754, goldberg1991floating},
a design whose emphasis on making arithmetic behavior explicit as part of the programming
model dates back to the early development of floating-point standards \cite{kahan1996ieee}.

Despite its ubiquity, floating-point arithmetic is well known to exhibit unintuitive
behaviors arising from rounding, non-uniform spacing, non-associativity, and exceptional
values. These behaviors complicate formal reasoning about numerical correctness and have
motivated a substantial body of work on numerical stability, error analysis, and robust
algorithm design \cite{higham2002accuracy}. They also pose significant challenges for
program verification, particularly when floating-point computations are analyzed using
real-number abstractions \cite{monniaux2008floating}.

Reasoning about numeric behavior under bounded representations has long been addressed
through semantic abstraction techniques, most notably abstract interpretation
\cite{cousot1977abstract}. From a programming language perspective, the semantic gap
between idealized real-number models and finite-precision execution has long been
recognized as a source of subtle bugs and verification challenges. Language specifications typically describe arithmetic using
real-number abstractions while relying on underlying floating-point implementations, leaving
rounding behavior and boundary effects implicit or under-specified.
This mismatch complicates reasoning about properties such as equality, associativity, and
program equivalence in the presence of finite precision.

Fixed-point arithmetic provides a simpler and more predictable alternative by employing
uniform precision and explicitly bounded magnitude. It is widely used in embedded systems
and digital signal processing, where resource constraints and deterministic behavior are
paramount. However, fixed-point arithmetic is usually presented as an implementation or
representation technique rather than as a mathematical framework with explicit semantic
interpretation \cite{knuth1997taocp2}. Its correctness is typically argued at the level of
bit-width selection and scaling discipline, rather than through a formal semantic model.

Limited Math differs from both floating-point and fixed-point approaches by treating bounded
magnitude and bounded precision as axiomatic properties of the mathematical domain itself.
LM does not propose a new numeric format, nor does it prescribe a particular hardware or
language-level representation. Instead, it provides a semantic framework in which quantization,
boundary behavior, and deviation from classical arithmetic are explicit, deterministic, and
formally analyzable. In this sense, Limited Math complements existing numeric representations
by offering a principled basis for reasoning about their semantic consequences, rather than
serving as an alternative numeric encoding.

\subsection{Why Limited Math Is Not Just Fixed-Point Arithmetic}

A natural question is whether Limited Math reduces to fixed-point arithmetic under a
different presentation. While LM shares certain surface similarities with fixed-point
representations—such as uniform precision and bounded magnitude—the two differ
fundamentally in intent, abstraction level, and scope.

Classical discussions of fixed-point and seminumerical computation primarily focus on
representation choices, scaling strategies, and implementation techniques. These treatments
address how numbers are encoded and manipulated efficiently on finite hardware, as
exemplified by foundational work on seminumerical algorithms \cite{knuth1997taocp2}.
Correctness in this setting is typically argued operationally, in terms of avoiding overflow
or selecting sufficient bit-widths, rather than semantically.

Fixed-point arithmetic, as commonly used in practice, specifies how numeric values are
stored and computed, but it does not by itself provide a semantic account of how classical
mathematical meaning is related to bounded execution. In particular, it does not explain
how algebraic laws change at numeric boundaries, how error accumulates across composed
expressions, or under what conditions classical reasoning remains valid.

Limited Math, by contrast, is defined at the semantic level rather than the representational
level. Classical real-valued functions and operators retain their mathematical meaning, and
a single value-mapping operator mediates between unbounded semantics and bounded
execution. This separation of meaning from execution mirrors standard approaches in
programming language semantics, where denotational meaning is defined independently of
concrete evaluation strategies \cite{pierce2002tapl}.

By making bounded magnitude, bounded precision, and boundary behavior axiomatic rather
than incidental, LM provides a framework for reasoning about \emph{when} and \emph{why}
deviations from classical arithmetic occur. In this sense, Limited Math is not an alternative
numeric format, but a semantic framework for understanding bounded computation and its
consequences.

\subsection{Relation to Bounded Arithmetic and Finite Model Theory}

Bounded arithmetic and finite model theory investigate mathematical and logical systems
under explicit finiteness constraints, treating size bounds as fundamental rather than as
approximations of infinite structures. These areas have produced deep results concerning
expressiveness, decidability, and complexity by restricting the domains over which
computation and reasoning are performed.

Finite model theory, in particular, studies the behavior of logical formulas over finite
structures and has shown that finiteness fundamentally alters the properties of models,
expressibility, and definability \cite{libkin2004elements}. By rejecting implicit infinitary
assumptions, it emphasizes that reasoning over finite domains requires distinct semantic
and methodological tools.

Limited Math is philosophically aligned with this perspective in that it rejects unexamined
infinities and treats boundedness as a first-class property. The restriction on set cardinality
in LM prevents implicit combinatorial blow-up and ensures that all collections are finite,
enumerable, and internally representable. In this respect, LM shares the finite-model
intuition that unbounded constructions cannot be assumed without consequence.

However, Limited Math differs fundamentally in focus and intent. Bounded arithmetic and
finite model theory are primarily concerned with logical expressiveness, proof systems, and
complexity-theoretic classification. Limited Math, by contrast, is not a logical theory and
does not aim to characterize what can be proven or computed within a given complexity
class. Its focus is on numerical semantics: how bounded magnitude, bounded precision, and
bounded structure interact with arithmetic operations and program execution.

As a result, LM should be viewed as complementary rather than competing with these
theoretical frameworks. It adopts the finite perspective they champion, but applies it to
the semantic interpretation of numeric computation in programming languages, rather
than to logical definability or proof-theoretic strength.

\subsection{Relation to Classical Models of Computation}

Classical models of computation, including Turing machines and the lambda calculus,
assume unbounded memory and infinite domains. These abstractions underpin foundational
results such as the undecidability of the halting problem \cite{turing1936computable} and
remain indispensable to theoretical computer science \cite{sipser2012theory}. When
computation is restricted by explicit resource bounds, however, execution necessarily
ranges over a finite state space, a perspective that has been studied extensively in
formal verification through finite-state and resource-bounded automata models
\cite{alur1994timed}.

Limited Math does not challenge these models or their associated results. Instead, it follows
the standard semantic separation between abstract meaning and concrete execution models
emphasized in formal treatments of programming language semantics \cite{winskel1993semantics}.
It then adopts a complementary viewpoint that reflects the operational reality of computation
on physical machines. When numeric values, data structures, and memory are explicitly bounded,
program execution necessarily ranges over a finite state space.

Under such bounded conditions, execution can be modeled as a deterministic transition
system over a finite set of global states. In this setting, every computation either reaches
a halting state or eventually revisits a previously encountered state and evolves cyclically.
This observation is not a new theoretical result, nor does it weaken classical undecidability
theorems, which rely on unbounded models by design.

Rather, the contribution of Limited Math lies in making this finite-state interpretation
semantically explicit. By aligning mathematical reasoning with bounded execution, LM
provides a precise framework for understanding how classical notions of nontermination,
divergence, and undecidability manifest—or fail to manifest—in real computational systems.

\subsection{Scope and Limitations}

Limited Math intentionally restricts expressiveness in order to make resource bounds
explicit and semantically meaningful. In particular, analytic notions such as limits,
derivatives, and integrals are not intrinsic to the LM numeric domain and cannot be
defined operationally without reintroducing infinitary assumptions. Alternative approaches such as interval arithmetic provide correctness guarantees
by enclosing real-valued results within numeric bounds, but they pursue different
goals and trade-offs than the semantic alignment emphasized by Limited Math
\cite{moore2009introduction}.
 As discussed
earlier, such constructs are treated semantically rather than internalized within the
bounded framework.

Similarly, classical algebraic laws hold in Limited Math only under explicit in-range
conditions. Once numeric bounds or precision limits are reached, saturation or
quantization effects may cause associativity, distributivity, or cancellation laws to
fail. These deviations are not artifacts of the framework but direct consequences of
making finite precision and bounded magnitude explicit.

Limited Math therefore does not aim to replace classical mathematics, real analysis,
or unbounded models of computation. Its purpose is more modest and more precise:
to provide a semantic framework in which the consequences of finiteness are visible,
deterministic, and analyzable. LM is best suited to reasoning about numerical behavior,
program correctness, and execution properties in settings where bounded precision,
bounded structure, and bounded resources are inherent and unavoidable.

\subsection{Summary}

Limited Math draws on ideas from numerical computation, finite structures, and formal semantics while offering a distinct perspective. By treating boundedness as a foundational principle rather than an implementation detail, LM provides a coherent framework for reasoning about numerical behavior and computation in real programming languages. Its contribution lies in clarifying the semantic consequences of finiteness and making them explicit, analyzable, and mathematically rigorous.

\section{Conclusion and Future Work}

This paper introduced \emph{Limited Math (LM)}, a bounded mathematical framework designed to align formal reasoning with the realities of finite computation. By making bounds on numeric magnitude, numeric precision, structural complexity, and—when implemented—memory explicit, Limited Math eliminates implicit infinities that commonly arise in traditional mathematical abstractions used for computation.

At the core of LM is a finite numeric domain equipped with a deterministic value-mapping operator that enforces saturation and quantization. Functions and operators are interpreted semantically at the classical level and then mapped into the bounded domain, separating meaning from execution while minimizing unnecessary loss of precision. Within in-range conditions, LM coincides with classical arithmetic; at boundaries, deviations are explicit, deterministic, and analyzable. By additionally bounding set cardinality, LM ensures that all collections are enumerable and internally representable, preventing infinitary behavior from re-entering through combinatorial constructions.

When combined with bounded memory in computer implementations, Limited Math yields a finite-state execution model in which every computation either halts or eventually cycles. This result does not contradict classical undecidability theorems, but instead reflects the operational reality of finite machines and enables precise reasoning about program behavior within bounded environments.

Limited Math is not proposed as a replacement for classical mathematics or real analysis. Rather, it serves as a complementary framework for studying mathematics and computation under explicit resource constraints. Its value lies in providing clearer semantics, predictable behavior, and a unified foundation for reasoning about finite-precision arithmetic, bounded structures, and finite computation.

\paragraph{Future Work.}
Several directions for future research naturally follow from this work. These include deeper analysis of stability and error accumulation under LM arithmetic, integration of LM semantics into programming language specifications and verification tools, and exploration of LM-based models for optimization and control. Extensions to probabilistic or stochastic variants of Limited Math, as well as empirical comparisons with floating-point semantics in real-world programs, also represent promising avenues for further investigation.

By treating boundedness as a foundational principle rather than an implementation artifact, Limited Math opens a path toward a more explicit and rigorous understanding of computation as it is actually performed.

\bibliographystyle{plain}
\bibliography{reference.bib}

\end{document}